 \definecolor{lime}{HTML}{A6CE39}
 \DeclareRobustCommand{\orcidicon}{
 	\begin{tikzpicture}
 	\draw[lime, fill=lime] (0,0)
 	circle [radius=0.16]
 	node[white] {{\fontfamily{qag}\selectfont \tiny ID}};
 	\draw[white, fill=white] (-0.0625,0.095)
 	circle [radius=0.007];
 	\end{tikzpicture}
 }
 \renewcommand{\orcidID}[1]{\href{https://orcid.org/#1}{\ensuremath{\orcidicon}}}
\begin{document}
\title{Sequential Elimination Voting Games}
\author{
    Ulysse Pavloff
    \and
    Tristan Cazenave
    \and
    Jérôme Lang
}
\authorrunning{U. Pavloff et al.}
\institute{LAMSADE, Université Paris-Dauphine, PSL, CNRS, CEA List\\
\email{ulysse.pavloff@cea.fr}
}
\maketitle              
\begin{abstract}
Voting by sequential elimination is a low-communication voting protocol: voters play in sequence and eliminate one or more of the remaining candidates, until only one remains. 
While the fairness and efficiency of such protocols have been explored, the impact of strategic behaviour has not been addressed. We model voting by sequential elimination as a game. Given a fixed elimination sequence, we show that the outcome is the same in all subgame-perfect Nash equilibria of the corresponding game, and is polynomial-time computable. We measure the loss of social welfare due to strategic behaviour, with respect to the outcome under sincere behaviour, and with respect to the outcome maximizing social welfare.
We give tight bounds for worst-case ratios, and show using experiments that 
the average impact of manipulation can be much lower than in the worst case. 

\keywords{Social Choice \and Algorithmic Game Theory.}
\end{abstract}

\section{Introduction}

The classical view of voting is centralized: voters submit their votes to a central authority, which computes the outcome and sends it back to the voters. However, such a centralized voting protocol is not suitable in (typically low-stake) contexts where there are few voters and possibly many candidates. On the one hand, asking each voter to report a ranking over all candidates puts too much burden on them, incurring a cost that may exceed the benefit they can draw from taking part to the vote; on the other hand, asking them to rank or approve a small set of candidates (possibly even one) can be highly inefficient if the number of candidates exceeds the number of voters, as votes can be scattered in such a way that no candidate gets more than a handful of votes, making the collective decision quite arbitrary. 

For this reason, a recent line of research has focused on the design of {\em low-communication decentralized protocols}. In such protocols, there is no central authority and voters are asked to report a very small amount of information. A specific low-communication voting protocol is defined in \cite{bouveret_voting_2017}: we start from a sequence of agents, whose length is the number of candidates minus one; this sequence is either fixed from the beginning or obtained by a randomization process. At each step, the agent designated by the sequence eliminates one of the candidates that have not been eliminated yet. After the last step there remains one candidate, which is declared the winner. \cite{bouveret_voting_2017} explore the normative properties of voting by sequential elimination, and suggest to choose elimination sequences that maximize the expected Borda score of the winner, which guarantees some (weak) form of efficiency and fairness; they show how such sequences can be computed. However, they leave a strategic study for further research.

As a consequence of the Gibbard-Satterthwaite theorem, sequential elimination, seen as a voting rule, is manipulable except when the elimination sequence involves only one voter. 
Now, a sequential elimination protocol can also be seen as a turn-based game, where voters play by eliminating a candidate whenever it is their turn to play. Assuming perfect knowledge and perfectly rational agents, the suitable game-theoretic concept here is the {\em subgame perfect Nash equilibrium} (SPNE); since agents are assumed to have strict preferences over candidates, all SPNE have the same outcome (elected candidate).
Now, some questions come up:

\begin{enumerate}
    \item Is it easy to compute the SPNE outcome of a sequential elimination game? Does it have a nice characterization?
    \item What is the social welfare of the SPNE outcome of a sequential elimination game, compared to the social welfare of the optimal outcome? and compared to the social welfare obtained if all agents play sincerely?
\end{enumerate}

As for Question 1, we give two positive answers: the SPNE outcome is computable in polynomial time, and moreover it has a very simple characterization: it is the outcome obtained by reversing the sequence and assuming voters play sincerely. Such a characterization is reminiscent of equilibria in sequential resource allocation \cite{kohler_class_1971,kalinowski_strategic_2013}, although the setting is quite different. This has some interesting simple consequences: first, if the sequence is {\em palindromic} (invariant by symmetry), then the strategic outcome is the same as the sincere outcome, so that strategic behaviour has no impact. This might seem contradicting the fact that the voting process is manipulable; however, a manipulation is a deviation by a single agent; here just saying that if all agents play best responses, then their strategic behaviour will {\em globally} have no impact. 

As for Question 2, we have to choose a way of measuring social welfare. If agents have arbitrary utilities consistent with their ranking over candidates, as usual, the cost is unbounded: such impossibility results abound in the distortion-theoretic study of voting. However, if we take the Borda score of a candidate as a proxy for its social welfare (which is a classical assumption, {\em e.g.},
\cite{BranzeiCMP13,bouveret_voting_2017}),
we show that with $n$ voters and $m$ candidates, the corresponding {\em price of anarchy} (defined, as usual, as the worst-case ratio between the social welfare of the best outcome and that of the SPNE) is $\frac{O_{\rm{max}}-1+(n-1)(m-1)}{m-1}$, where $O_{\rm{max}}$ is the number of  occurrences of the voter who appears most frequently in the sequence. We also define the {\em sincerity ratio} as the worst-case ratio between the social welfare of the outcomes obtained if all voters are, respectively, sincere and strategic.
For this measure we found an upper bound in the order of $n$, which is somewhat reasonable as the number of agents is assumed to be low.

We complete our worst-case study by an average-case study via simulations, so as to measure in practice the price of strategic behaviour obtained when the voters' profiles follow a given distribution, for the choice of a few specific elimination sequences, and for several distributions over profiles.

The outline of the paper is as follows. We survey related work in Section \ref{sec:related}. 
In Section \ref{sec:spne} we give computation and characterization results about SPNE. 
In Section \ref{sec:social-welfare} we study the price of strategic behaviour: price of anarchy (Subsection \ref{subsec:poa}), sincerity ratio (Subsection \ref{subsec:bounds}). In \autoref{sec:average} we study the average case of these measures. Section \ref{sec:conclu} concludes.

\section{Related work}\label{sec:related}

Voting with low communication is a relatively new topic. Most of it is surveyed in 
\cite{BoutilierR16}. Somewhat related to our work is the notion of {\em distortion} in voting: if we assume that agents have cardinal utilities but can only report ordinal preferences upon which a voting rule is applied, what is the worst case ratio between the social welfare of the optimal candidate and that of the outcome of the voting rule used? The key question consists in deciding which trade-off between distortion and amount of communication is needed. A survey can be found in \cite{AnshelevichF0V21}. 

Our starting point is \cite{bouveret_voting_2017}, who defines a specific rule and communication protocol for voting, parameterized by an elimination sequence. They study its axiomatic properties and give an algorithm for identifying the sequence giving the result closest to the Borda rule.

Other specific low-communication rules have been recently defined in \cite{GrossAX17} and  \cite{ChenL020}.

Voting in stages where at each stage a voter takes an action has been the topic of a few works. 
{\em Stackelberg voting games} \cite{XiaC10a} assume that voters have complete knowledge of the others' preferences, and vote in a predefined sequence, observing the votes already cast before theirs. The main difference with our work is the form of the action taken by voters: in \cite{XiaC10a} they choose a complete vote (a ranking) and in the end a fixed voting rule is used to determine the outcome; in our setting, the actions available to the voters are candidate-eliminating actions. In \cite{DesmedtE10}, voters also act in sequence and choose either between their (sincere) vote or abstention, given that voting incurs a small penalty. 

\cite{Anbarci06} considers an elimination game 
like ours but with only two agents (and 7 alternatives, but the results extend to more) and the alternating sequence 121212, and 
characterizes the SPNE outcome, which our Theorem 1 generalizes (see Section \ref{sec:spne}).

Viewing strategic voting in a static context as a game is classical and dates back to \cite{farquharson_theory_1969}. The algorithmic aspects of strategic voting games with few voters have been recently explored \cite{ElkindGRS15}. A recent research trend focuses on voting dynamics, where voters are able to change their vote after gaining partial information about other votes: see \cite{Meir17} for a survey. \cite{BranzeiCMP13} and \cite{KavnerXia21}, determine the price of anarchy in iterative voting for a few specific voting rules. 



Elimination sequences are reminiscent of {\em picking sequences} used in fair division of indivisible goods. They have been given a game-theoretic analysis in  \cite{kohler_class_1971} and \cite{kalinowski_social_nodate}, which give a characterization and a computational analysis of SPNE. Their manipulation has been studied  in \cite{bouveret_general_2011,BouveretL14,TominagaTY16,Walsh16,AzizBLM17,AzizGT17}.

\section{Equilibria in sequential elimination games}\label{sec:spne}

A {\em sequential elimination (voting) game} consists of:

\begin{itemize}
    \item a set of players (or voters) $N =\{1, \dots , n\}$;
    \item a set of candidates $C = \{c_1 , \dots , c_m \}$; actions consist in eliminating a candidate that has not been eliminated yet.
    \item a preference profile $V=  (V_1,\dots,V_n)$; each $V_i$ is a linear order over $C$ expressing the preferences of voter $i$;
    \item an \textit{elimination sequence} for $n$ voters and $m$ candidates, that is,  a sequence $\pi = (\pi(1), \dots , \pi(m-1))$ in $N^{m-1} $, where $\pi(i)$ is the $i^{th}$ voter in the sequence.
\end{itemize}  

Sometimes we note $\succ_{V_i}$ instead of $V_i$. We use the abridged notation $abcd$ to denote the ranking $a \succ_{V_i} b \succ_{V_i} c \succ_{V_i} d$. Similarly, we write an elimination sequence (1,2,3,4) as well as 1234 with the same signification.
For $c \in C, r(c,V_i)$ denotes the rank of $c$ in $V_i$ (from 1 for the
best candidate to $m$ for the worst one). The Borda score
of $c$ for $V$ is $S_B(c,V)=\sum_{i=1}^{n}{m - r(c, V_i)}$.

\label{palindromic_def} A \textit{palindromic sequence} is an elimination sequence $\pi$ of size $m$ such that $\pi(i)=\pi(m-i)$. For example, $(1,2,3,3,2,1)$ is palindromic.

An elimination action is said to be {\em sincere} if the acting voter eliminates her least favorite candidate. A voter is sincere if all of her elimination actions are sincere. If all voters play sincerely in an  elimination game then we say that the elimination game is sincere. We assume complete information, which is rather standard in algorithmic game theory.

\begin{example}
Let $\pi=(1,2,3,1)$ and $V = (abcde, edcba, debca)$. 
If all voters are sincere then 1 eliminates $e$, then 2 eliminates $a$, then 3 eliminates $c$ and finally 1 eliminates $d$, so the winner will be $b$. 
\end{example} 

A {\em subgame-perfect Nash equilibrium} (SPNE) is defined in games that consist of multiple stages or subgames. 
A SPNE is a strategy profile $s^{*}$ with the property that in no subgame can any player $i$ do better by choosing a strategy
different from $s^{*}_i$, given that every other player $j$ adheres to $s^{*}_j$.
The {\em outcome} of a SPNE is the corresponding winning candidate.
Because voters have strict preferences over outcomes, all SPNE of the game are associated with the same outcome.

Our main characterization result is the following:

\begin{theorem} \label{firstTheorem}
A SPNE of a vote by sequential elimination has the same outcome as sincere voting with the sequence reversed.
\end{theorem}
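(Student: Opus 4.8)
The plan is to prove the stated identity by backward induction on the number of candidates $m=|C|$; this is legitimate because a sequential elimination game is finite and of perfect information, and (as already noted) strict preferences make the SPNE outcome unique. Write $\mathrm{out}(\pi,C)$ for the SPNE outcome of the game with sequence $\pi$ on candidate set $C$, and $\mathrm{sinc}(\pi,C)$ for the outcome of sincere play; the theorem is the identity $\mathrm{out}((p_1,\dots,p_{m-1}),C)=\mathrm{sinc}((p_{m-1},\dots,p_1),C)$. The base case $m=2$ is immediate: a single voter eliminates her worst candidate on both sides. For the step I would unfold the first sincere move on the right-hand side: the voter $p_{m-1}$, who is the \emph{last} voter of $\pi$, eliminates her worst candidate $w$ in $C$, so $\mathrm{sinc}((p_{m-1},\dots,p_1),C)=\mathrm{sinc}((p_{m-2},\dots,p_1),C\setminus\{w\})$, which by the induction hypothesis equals $\mathrm{out}((p_1,\dots,p_{m-2}),C\setminus\{w\})$. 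The theorem then follows provided one establishes the following lemma, which I regard as the heart of the matter.

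\emph{Lemma.} If $q$ is the last voter of an elimination sequence $\pi$ on $C$ and $w$ is $q$'s least preferred candidate in $C$, then $\mathrm{out}(\pi,C)=\mathrm{out}(\pi^-,C\setminus\{w\})$, where $\pi^-$ is $\pi$ with its last entry deleted. Intuitively, $q$ moving last will veto $w$ whenever $w$ survives to the final pair, so $w$ can never win; the only extra possibility the presence of $w$ offers the earlier voters is to spend a turn deleting $w$, but against an optimal $q$ this is interchangeable with deleting $q$'s \emph{second}-worst candidate $w'$ (either way $q$ eventually vetoes whichever of $w,w'$ remains), and that option is already available in $\pi^-$.

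I would prove the lemma by a separate induction on $m=|C|$, the case $m=2$ being trivial ($q$ keeps her favorite on both sides). For $m\ge 3$, peel off the first move $p_1$ of $\pi$. If $p_1$ eliminates a candidate $c\ne w$, the residual game $((p_2,\dots,p_{m-2},q),C\setminus\{c\})$ again has $q$ last with worst candidate $w$, so by the inductive hypothesis its value is $\mathrm{out}((p_2,\dots,p_{m-2}),C\setminus\{c,w\})$; if instead $p_1$ eliminates $w$, the residual game has $q$ last with worst candidate $w'$, so its value is $\mathrm{out}((p_2,\dots,p_{m-2}),C\setminus\{w,w'\})$, which is the same expression evaluated at $c=w'$. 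Hence the set of outcomes $p_1$ can force in $(\pi,C)$ is exactly $\{\,\mathrm{out}((p_2,\dots,p_{m-2}),C\setminus\{c,w\}) : c\in C\setminus\{w\}\,\}$, which is precisely the set of outcomes $p_1$ can force in $(\pi^-,C\setminus\{w\})$. Since $p_1$'s preference order is fixed, her optimal choice, and therefore the SPNE outcome, is the same in both games. Feeding this back into the outer induction completes the proof.

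The main obstacle is the lemma, and the delicate point there is that a candidate being irrelevant to the \emph{outcome} (here, $w$ can never win) does not by itself make it irrelevant to the \emph{strategic options} of the non-final voters, so one must genuinely verify that the additional ``delete $w$'' move is absorbed by the ``delete $w'$'' move that is still available after $w$ has been removed. A minor but necessary chore is to check that the degenerate small cases (empty residual sequences, singleton residual candidate sets) line up so that the two inductions agree at their bases; once that bookkeeping is in place the rest is a routine bottom-up computation.
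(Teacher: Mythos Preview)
Your proof is correct and follows the same inductive skeleton as the paper: induct on $m$, and reduce the game by stripping off the last voter $q$'s worst candidate $w$ before applying the hypothesis. The paper compresses your Lemma into a one-line weak-dominance remark (``voters before stage $m-1$ have no incentive to eliminate $w$, since $q$ will eliminate it anyway''), whereas you make this step fully rigorous by a separate inner induction showing that the first mover's reachable-outcome set in $(\pi,C)$ coincides with that in $(\pi^-,C\setminus\{w\})$; in particular your observation that the extra move ``delete $w$'' is absorbed by ``delete $w'$'' is exactly the content the paper's sentence is gesturing at but does not spell out.
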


\begin{proof}
By induction on $m$.
The case $m = 1$ is clear. Suppose the result
holds for $m-1$. We show that it holds for $m$.
Let $\pi$ be the elimination sequence.
Let us focus on any SPNE of the game. 
The last voter of the sequence $\pi$ is $\pi(m-1)$.
For every state of the game tree before stage $m-1$, the voters have no incentive to eliminate the least preferred candidate of $\pi(m-1)$. 
Indeed, if the least preferred candidate of $\pi(m-1)$ is one of the two remaining candidates when $\pi(m-1)$ votes, $\pi(m-1)$ will eliminate her. 
We then look at the rest of the sequence with the $m-1$ remaining candidates and we know by induction hypothesis that any SPNE gives the same outcome as sincere voting with the reversed sequence.
\end{proof}

This result generalizes the characterization of SPNE for the Strike game \cite{Anbarci06}. Also, it is reminiscent of a result in fair division of indivisible goods: when two agents play a round robin game (strict alternation), 
 the SPNE can be computed by simply reversing the policy and the preference ordering \cite{kohler_class_1971}; it was generalized in \cite{kalinowski_strategic_2013} to any picking sequence, but still for two agents; our Theorem \ref{firstTheorem}, which does not concern allocation but voting, holds for any number of agents.  

Let us look at an example before explaining the reasoning behind  Theorem \ref{firstTheorem} and the corollaries that stem from it.

\begin{example}\label{exampleFirstTheorem}
Let $n = 3$, $m = 4$, $V = (abcd, cbad, cadb)$ and $\pi=(1,2,3)$. 
If the voters are sincere then 1 eliminates $d$, 2 eliminates $a$ and 3 eliminates $b$ which gives us $c$ as the winner of the sincere vote.  Now, if the voters are strategic then 1 eliminates $c$, 2 eliminates $d$ and 3 eliminates $b$ which gives us $a$ as the winner. Indeed if 1 eliminates $c$, then 2 has the choice of eliminating $a$, $b$ or $d$. 
Knowing that 3 will eliminate $b$ if 2 does not, 2 only has to choose the candidate who will win between $a$ and $d$. So 2 chooses to eliminate $d$, then 3 eliminates $b$ and $a$ wins.

\begin{figure}
\centering
\begin{tikzpicture}[scale=0.81,
        level 1/.style = {sibling distance = 2cm,level distance = 1.2cm},
        level 2/.style = {sibling distance = 2.8cm,level distance = 1.3cm},
        level 3/.style = {sibling distance = 1.4cm,level distance = 1.5cm},
        rn/.style={circle, draw=gray!60, fill=gray!5, minimum size=5mm, inner sep=0pt,thick}, 
        blank/.style={draw=white,fill=white, minimum size=4mm,inner sep=0pt},
        squarednode/.style={rectangle, draw=gray!60, fill=white, inner sep=0pt,thick, minimum size=5mm},
]

\node[rn]{1}
    child{node[rn]{2} edge from parent node [blank]{a}}
    child{node[rn]{2} edge from parent node [blank]{b}}
    child{
        node[rn]{2}
        child{ [black]
            node[rn]{3} 
            child{ 
                node[squarednode]{d} edge from parent[->] node [blank]{b}
            }
            child{ 
                node[squarednode]{b} edge from parent[->] node [blank]{d}
            }
            edge from parent node [blank]{a}
        }
        child{ [black]
            node[rn]{3} 
            child{ 
                node[squarednode]{d} edge from parent[->,black] node [blank]{a}
            }
            child{ 
                node[squarednode,draw=orange!60]{a} edge from parent[->,orange] node [blank]{d}
            }
            edge from parent[orange] node [blank,above]{b}
        }
        child{
            node[rn]{3} 
            child{ 
                node[squarednode]{b} edge from parent[->,black] node [blank]{a}
            }
            child{
                node[squarednode,draw=red!60]{a} edge from parent[->,red] node [blank]{b}
            }
            edge from parent[red] node [blank]{d}
        }
        edge from parent[red] node [blank]{c}
    }
    child{ node[rn]{2} edge from parent node [blank]{d}   }
    
    ;

\end{tikzpicture}
\caption{Part of the game tree, Example \ref{exampleFirstTheorem}.}
\label{tree:exampleFirstTheorem}
\end{figure}
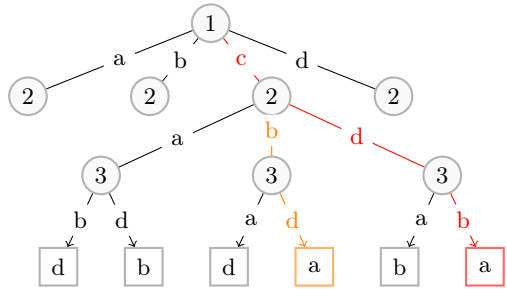

\autoref{tree:exampleFirstTheorem} depicts the part of the game tree where voter 1 eliminates $c$. Circled nodes represent  voters, boxed nodes represent the winner and each edge represents an action. The path leading to the outcome is in red. As we can see, voter 2 can indifferently choose between eliminating $b$ (orange path) or $d$, with the same outcome.
\end{example}

The idea behind \autoref{firstTheorem} is the following: suppose that in each final non-terminal state (each state having only terminal states as successors), each agent has a \textit{weakly dominant action}, i.e. an action that provides at least the same utility as all the others. 
We can then replace each final non-terminal state by the state that will be reached as a result of the agent's dominant strategy. By repeating this process indefinitely we can go back to the beginning, in this case we say that the game is solvable by \textit{within-state dominant-strategy backward induction (WSDSBI)}. 
We know from  \cite{xia_strategic_2011} that if a game is solvable by WSDSBI, then the outcome is unique.

\begin{corollary} \label{cor:palindromic}
If a sequence $\pi$ is palindromic, the outcome of the SPNE is the outcome of the sincere vote. 
\end{corollary}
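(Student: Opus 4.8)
The plan is to obtain this as an immediate consequence of Theorem~\ref{firstTheorem}, essentially by matching definitions. For an elimination sequence $\pi = (\pi(1), \dots, \pi(m-1))$, let $\bar\pi$ denote its reverse, defined by $\bar\pi(i) = \pi(m-i)$ for $i = 1, \dots, m-1$. Theorem~\ref{firstTheorem} says that the SPNE outcome of the game played with $\pi$ equals the outcome of the sincere vote played with $\bar\pi$. So all that is needed is to recognize that the palindromicity hypothesis is precisely the statement $\bar\pi = \pi$.

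The first step is that verification: by the definition of a palindromic sequence, $\pi(i) = \pi(m-i)$ for every $i$, which is exactly $\pi(i) = \bar\pi(i)$ for every $i$, i.e. $\pi$ and $\bar\pi$ are the same function on $\{1, \dots, m-1\}$. The second step is then pure substitution: plugging $\bar\pi = \pi$ into Theorem~\ref{firstTheorem} gives that the SPNE outcome with sequence $\pi$ equals the outcome of the sincere vote with sequence $\pi$, which is the claim. The only point worth spelling out is that both quantities are well-defined single candidates — the SPNE outcome because all SPNE share the same outcome under strict preferences (as noted before the theorem), and the sincere outcome because each acting voter has a unique least-preferred remaining candidate, making the sincere run of the protocol deterministic.

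There is essentially no obstacle: the content is entirely in Theorem~\ref{firstTheorem}, and the corollary is a one-line application once the definition of ``palindromic'' is unwound. If one preferred a self-contained argument, one could instead run the WSDSBI unfolding used in the proof of Theorem~\ref{firstTheorem} directly on a palindromic $\pi$: since $\pi(m-1) = \pi(1)$, $\pi(m-2) = \pi(2)$, and so on, the layer-by-layer backward induction reproduces exactly the forward sincere elimination order, but this merely re-derives the same fact.
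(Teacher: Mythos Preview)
Your proposal is correct and matches the paper's approach: the corollary is stated without proof in the paper precisely because it follows immediately from Theorem~\ref{firstTheorem} by observing that a palindromic sequence equals its own reverse.
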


\begin{corollary} 
\label{cor:nTimes}
If a voter occurs $q$ times in the elimination sequence, the outcome of the SPNE will not be one of her $q$ least preferred candidates.
\end{corollary}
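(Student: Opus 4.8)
The plan is to reduce to sincere voting through Theorem~\ref{firstTheorem} and then run a short counting argument. By Theorem~\ref{firstTheorem}, the SPNE outcome of the game with sequence $\pi$ coincides with the winner of the \emph{sincere} game played on the reversed sequence $\bar\pi$. Reversal only permutes the positions of $\pi$, so the voter under consideration --- call her $v$ --- still occurs exactly $q$ times in $\bar\pi$, and she genuinely gets $q$ turns, since $\bar\pi$ has length $m-1$ and the game plays out the entire sequence. Hence it suffices to show that, in the sincere run of $\bar\pi$, none of $v$'s $q$ least preferred candidates survives to the end.

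First I would fix notation: let $B \subseteq C$ be the set of the $q$ candidates ranked lowest by $v$ in $V_v$, so every candidate in $B$ is ranked below every candidate in $C \setminus B$. The crux is a monotone invariant on the set $R$ of candidates still alive at any point of the run: at each of $v$'s turns, if $B \cap R \neq \emptyset$, then $v$'s worst surviving candidate lies in $B$ (by the definition of $B$), and since $v$ plays sincerely she eliminates it, so $|B \cap R|$ strictly decreases at that turn. Moves by other voters can only decrease $|B \cap R|$ as well, never increase it.

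Then I would count. Initially $|B \cap R| = q$; across $v$'s $q$ turns, each turn either finds $B \cap R$ already empty or decreases $|B \cap R|$ by one, and all other moves can only help. Therefore after $v$'s last turn $B \cap R = \emptyset$, i.e.\ every candidate of $B$ has been eliminated; since the run terminates with a single surviving candidate, that candidate is not in $B$, which is exactly the claim.

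I do not expect a genuine obstacle here. The only point that needs a word of care is that eliminations of members of $B$ by voters other than $v$ do not break the bookkeeping: they merely accelerate the emptying of $B \cap R$, so the budget of $q$ turns for $v$ remains sufficient. A secondary bit of hygiene is to note that when $q = m-1$ the argument still makes sense, as $B$ then consists of all candidates except $v$'s top choice, which is forced to win.
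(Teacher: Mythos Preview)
Your proof is correct and follows the same approach as the paper: reduce to sincere voting on the reversed sequence via Theorem~\ref{firstTheorem}, then observe that a sincere voter with $q$ turns necessarily sees all of her $q$ bottom candidates eliminated (either by herself or earlier by others). The paper's argument is given informally in the paragraph following the corollary; your invariant-and-counting formulation is simply a more explicit rendering of the same idea.
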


Let us comment further on Corollary \ref{cor:nTimes}. 
In a sincere vote, if a voter $i$ votes only once in the sequence, when her turn comes she eliminates the candidate she least prefers among the remaining ones. Thus, if $i$ does not eliminate her least preferred candidate, it is because this candidate has already been eliminated. 
We can see that this also works if $i$ votes $q$ times. 
The only reason why $i$ does not eliminate some of her $q$ least preferred candidates is that 
they already have been eliminated by another voter.
Strategic voting being equivalent to sincere voting with the sequence reversed, this reasoning remains valid for strategic voting.

In Appendix \ref{appendix:sec:sincstratvoters}. we extend this characterization to games with both strategic and sincere voters.

\section{Social welfare worst-case analysis}\label{sec:social-welfare}
In this section we discuss the possible detrimental effects of strategic votes on social welfare. For this we introduce measures considering worst-case scenarios for social welfare. We first study the price of anarchy and then introduce a new measure more relevant to the specific type of game we consider.

\subsection{Price of Anarchy}\label{subsec:poa}

The \textit{price of anarchy} \cite{koutsoupias_worst-case_2009} is 
defined when social welfare is quantified. In the absence of explicit cardinal utilities, it is classical to use the Borda score as a measure of social welfare ({\em e.g.}, \cite{BranzeiCMP13,bouveret_voting_2017}), and from now on this is what we will do. A first interesting question is to evaluate the price of anarchy associated with a given elimination sequence.
The price of anarchy is the worst-case ratio (over all preference profiles) of the maximum social welfare over the one reached in the SPNE.
Given a preference profile $V$, let $a$  be the candidate with maximum Borda score and $b$ the winner in the SPNE.  Then
\begin{equation}\label{basicRatio}
    PoA(\pi) = \max_V R_{ab}(\pi,V), \;\mbox{where}\; R_{ab}(\pi,V)=\frac{S_B(a,V)}{S_B(b,V)}.
\end{equation}

For sequential elimination games, it is possible to evaluate this quantity by means of the following theorem: 

\begin{theorem}\label{theorem:price-anarchy}
With $n$ voters and $m$ candidates the price of anarchy of a sequence $\pi$ calculated with
Borda score is 
\begin{equation}\label{price-anarchy}
PoA(\pi) = \frac{O_{\rm{max}}-1+(n-1)(m-1)}{m-1}, 
\end{equation}
where $O_{\rm{max}}$ is the number of 
occurrences of the voter  who appears most frequently in the sequence.
\end{theorem}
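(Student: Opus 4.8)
The plan is to establish the equality in \eqref{price-anarchy} by proving an upper bound $PoA(\pi) \le \frac{O_{\rm{max}}-1+(n-1)(m-1)}{m-1}$ for every profile $V$, and then exhibiting a profile that attains it. For the upper bound, the key observation is to bound $S_B(a,V)$ from above and $S_B(b,V)$ from below, where $a$ is the Borda winner and $b$ is the SPNE outcome. The numerator is trivially at most $n(m-1)$ since each voter can contribute at most $m-1$ to any candidate's Borda score. For the denominator, I would invoke Corollary~\ref{cor:nTimes}: since $b$ is the SPNE outcome and the voter $\pi(j)$ who appears most often occurs $O_{\rm{max}}$ times, $b$ cannot be among that voter's $O_{\rm{max}}$ least preferred candidates, so $r(b, V_{\pi(j)}) \le m - O_{\rm{max}}$, contributing at least $O_{\rm{max}}-1$ to $S_B(b,V)$. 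The remaining $n-1$ voters each contribute at least $0$, but a cleverer argument is needed: in fact each \emph{other} voter in the sequence also cannot have $b$ as their single least-preferred candidate (again by Corollary~\ref{cor:nTimes} applied with $q \ge 1$), so every voter who appears in $\pi$ contributes at least $1$. This gives $S_B(b,V) \ge (O_{\rm{max}}-1) + (n-1)\cdot 1 = O_{\rm{max}} + n - 2$ in the crude version, but to match the stated bound I expect one actually needs the sharper accounting where the "max-frequency" voter's contribution of $O_{\rm{max}}-1$ is separated out and the numerator is simultaneously controlled.

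Re-examining the target expression $\frac{O_{\rm{max}}-1+(n-1)(m-1)}{m-1}$, I read it as: in the worst case $S_B(a,V)$ is pushed as high as possible while $S_B(b,V)$ is pushed as low as possible, and the ratio of the extremal values is exactly this. So the refined plan is: show that one can always find the worst case among profiles where $b$ sits at rank exactly $m - O_{\rm{max}}$ for the most frequent voter (contributing $O_{\rm{max}}-1$) and at rank $m$ (last) for every voter not appearing, while the Borda winner $a$ can be placed to collect $m-1$ from each of the $n-1$ voters other than the pivotal one and something from the pivotal voter too. I would carefully argue that $S_B(b,V) \ge O_{\rm{max}}-1$ always holds from Corollary~\ref{cor:nTimes}, and — crucially — that whenever $b \ne a$ we may assume (by a normalization / worst-case reduction argument) that all voters other than the max-frequency one rank $b$ last, so $S_B(b,V) = O_{\rm{max}}-1$ exactly, while $S_B(a,V) \le (O_{\rm{max}}-1) + (n-1)(m-1) + \text{(pivotal voter's contribution to }a)$; bounding the pivotal voter's contribution correctly is what yields the clean numerator.

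For the matching lower-bound construction, I would build an explicit profile with $n$ voters and $m$ candidates together with a concrete witness for the sequence $\pi$: arrange the preferences of the most-frequent voter so that the SPNE (equivalently, by Theorem~\ref{firstTheorem}, sincere voting on the reversed sequence) elects a candidate $b$ that this voter ranks exactly $O_{\rm{max}}$-th from the bottom, make every other voter rank $b$ dead last, and let $a$ be a candidate ranked top by all the non-pivotal voters (and high enough by the pivotal one) so that $S_B(a,V)$ attains $O_{\rm{max}}-1 + (n-1)(m-1)$ while $S_B(b,V) = m-1$. Verifying that the reversed-sincere outcome on this constructed profile is indeed $b$ — i.e. that the eliminations never remove $b$ prematurely and do remove everything else — is the delicate bookkeeping step.

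The main obstacle I anticipate is twofold: first, proving the upper bound tightly rather than crudely — the naive application of Corollary~\ref{cor:nTimes} gives a weaker denominator bound, so one must show a worst-case profile can be normalized so that non-pivotal voters rank $b$ last without decreasing the ratio, and simultaneously control how much the pivotal voter can give to $a$; second, in the lower-bound construction, checking that the reversed-sincere elimination process on the explicit profile genuinely produces the intended winner $b$ for an arbitrary sequence $\pi$ with $\max$ multiplicity $O_{\rm{max}}$ — this requires choosing the candidate orderings so eliminations cascade correctly regardless of the order in which the pivotal voter's $O_{\rm{max}}$ turns are interleaved with the others'.
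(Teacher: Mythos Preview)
Your plan has a genuine gap in the upper-bound direction. You try to bound $S_B(a,V)$ and $S_B(b,V)$ independently, but the crude bounds $S_B(a,V)\le n(m-1)$ and $S_B(b,V)\ge m-1$ only give $PoA\le n$, strictly weaker than the target. The missing idea is a \emph{coupling}: since $b$ is elected in the SPNE while $a$ is not, at least one voter $x$ must rank $b$ above $a$; hence $S_B(a,V_x)\le S_B(b,V_x)-1$. Writing $t=S_B(b,V_x)$, this gives
\[
R_{ab}\;\le\;\frac{(n-1)(m-1)+t-1}{\,t+S_B(b,V\setminus\{V_x\})\,}\;\le\;\frac{(n-1)(m-1)+t-1}{\,t+(m-1-O^{(\pi)}_x)\,},
\]
where the last inequality uses Corollary~\ref{cor:nTimes} for every voter $i\neq x$ to get $S_B(b,V_i)\ge O^{(\pi)}_i$ and $\sum_{i\neq x}O^{(\pi)}_i=m-1-O^{(\pi)}_x$. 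This fraction is decreasing in $t$ (for $n\ge 2$), so it is maximised at the smallest admissible $t$, namely $t=O^{(\pi)}_x$ (again Corollary~\ref{cor:nTimes}); the denominator then collapses to $m-1$ and the numerator to $(n-1)(m-1)+O^{(\pi)}_x-1$, maximised when $x$ is the most frequent voter. This coupled optimisation is the actual mechanism in the paper; your outline never invokes the ``some voter prefers $b$ to $a$'' observation, which is precisely what pins the numerator below $n(m-1)$.

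Relatedly, your normalisation step and your construction both assert that the non-pivotal voters can rank $b$ dead last. That is impossible: by Corollary~\ref{cor:nTimes}, every voter $i$ appearing in $\pi$ satisfies $r(b,V_i)\le m-O^{(\pi)}_i$, so contributes at least $O^{(\pi)}_i\ge 1$ to $S_B(b,V)$. The denominator $m-1$ in the target is exactly $\sum_i O^{(\pi)}_i$, not $O_{\max}-1$; in the extremal profile each voter $i$ places $b$ at rank $m-O^{(\pi)}_i$ (not at rank $m$). There is also a recurring off-by-one: Corollary~\ref{cor:nTimes} gives $S_B(b,V_i)\ge O^{(\pi)}_i$, not $O^{(\pi)}_i-1$. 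Once you fix the construction accordingly (the paper lists four concrete criteria: $r(b,V_i)=m-O^{(\pi)}_i$ for all $i$; $a$ just below $b$ in $V_x$; $a$ top for every $i\neq x$; and no candidate other than $b$ sits below $b$ for more than one voter), verifying that reversed-sincere elimination elects $b$ is straightforward bookkeeping.
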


\begin{proof}
We want to determine an upper bound for $R_{ab}$ over all profiles $V$ for a given sequence. To do so, we want to find a candidate $a$ with the highest Borda score possible while simultaneously having $b$ elected during the strategic vote. A necessary feature is that at least one voter should prefer $b$ to $a$ to ensure that $b$ can be elected during a strategic vote. Indeed, if every voter prefers $a$ to $b$, $b$ cannot be elected.
 
We call $x$ the voter preferring $b$ to $a$ ($b\succ_{V_x}a$). Now let us specify the necessary features of $V_x$ in order to maximize the ratio $R_{ab}$. To do so let us decompose $R_{ab}$ as follow :
\begin{equation}\label{eq:RabDecomposed}
    R_{ab}=\frac{S_B(a,V)}{S_B(b,V)}=\frac{S_B(a,V-\{V_x\}) + S_B(a,V_x)}{S_B(b,V-\{V_x\})+S_B(b,V_x) }
\end{equation}
To minimize the impact of a voter $x$ preferring $b$ to $a$ on the ratio $R_{ab}$, candidates $a$ and $b$ should be next to each other\footnote{Indeed, a larger ranking difference between $a$ and $b$ would be detrimental to the maximization of \eqref{eq:RabDecomposed}. Please remember that the higher $r(i,V_j)$ the least preferred is the candidate $i$ for the voter $j$.} in $V_x$: $r(a,V_x) - r(b,V_x) = 1$, which is equivalent to $S_B(b,V_x)=S_B(a,V_x)+1$.
To determine the rank of $a$ and $b$ in the preferences of $x$ we can write $R_{ab}$ as a function of $k=S_B(a,V_x)$ : $f(k) = \frac{\alpha+k}{\beta+k}$,  with $\alpha=S_B(a,V-\{V_x\})$ and $\beta=1+S_B(b,V-\{V_x\})$.

By construction, $a$ is the candidate with the highest Borda score. Thus we have $S_B(a,V) \geq S_B(b,V)$. Hence subtracting from both sides the quantity $S_B(b,V_x) (=S_B(a,V_x)+1)$ one gets $\alpha \geq \beta$.
This immediately implies that $f$ is a decreasing function of $k$. 
Therefore, no matter which voter ($x$) prefers $b$ over $a$, to maximize the ratio we must increase\footnote{Increasing the rank is equivalent to decreasing the Borda score: $S_B(i,V_j)=m-r(i,V_j)$.} the rank of $a$ and $b$ in her preferences, which results in $r(b,V_x) = m-O^{(\pi)}_x$ (higher limit of the rank $r(b,V_x)$  of an elected candidate according to corollary \ref{cor:nTimes}) 
and $r(b,V_x) = r(a,V_x)-1$. Hence, $S_B(a,V_x) = S_B(b,V_x)-1 = O^{(\pi)}_x-1$. 
It is clear that $S_B(a,V_x)$ is maximum when $x$ is the voter with the most occurrences in the sequence ($O^{(\pi)}_x=O_{\rm{max}}$).

Bearing this in mind we now determine an upper bound for $\alpha$ and lower bound for $\beta$.

$S_B(a,V-{V_x})$ is maximum if every voter except $x$ have $a$ as their favorite candidate which gives the following upper bound : $S_B(a,V-{V_x}) \leq (n-1)(m-1)$.
As for $\beta$ we know of the minimal value of an elected candidate thanks to corollary \ref{cor:nTimes}: it is equivalent to the sum of each candidate's occurrences in the sequence. Thus we have $\beta \geq m-1-O_{\rm max}$. This leads to the following upper bound for $R_{ab}$:
\begin{equation*}
\forall V, R_{ab}(\pi,V) \leq \frac{(n-1)(m-1)+O_{\rm{max}}-1}{m-1} \; .
\end{equation*}
Thus:
\begin{equation} \label{finalRatioPOA}
PoA(\pi) \leq \frac{(n-1)(m-1)+O_{\rm{max}}-1}{m-1} \; .
\end{equation}

\

It now remains to make sure that there exists a preference profile for which this bound is indeed reached. 
To this end we need to exhibit a set of preferences leading to $a$ being the optimal candidate and $b$ being elected in a strategic vote despite having the minimum Borda score for an elected candidate:
\begin{enumerate}[label=(\roman*)]
    \item For $b$ to have the minimum Borda score, the rank of $b$ shall be $r(b,V_i)=m-O^{(\pi)}_i, \forall i \in N$. 
    \item Next we know that the voter ($x$) with the highest number of occurrences in the sequence must have $a$ right after $b$ in her preferences (with $b\succ_{V_x}a$). So $r(a,V_x)=r(b,V_x)+1$.
    \item $a$ must be the favorite candidate of every voter except $x$. $\forall i \in N-\{x\}$, $r(a,V_i)=1$.
    \item The last requirement is that a candidate should not appear with a ranking worse than $b$ in the preferences of more than one voter otherwise $b$ will be eliminated.
\end{enumerate}
It is easy to exhibit a preference pattern meeting these four criteria. We do not do so here but we will for the more involved case of the sincerity ratio in Appendix \ref{appendix:sec:struct}. 
Hence the upper bound \eqref{finalRatioPOA} is reached, which completes the proof.
\end{proof}

\textit{Remark 1:}  The order of votes in the sequence is irrelevant to the price of anarchy. The only important element is the maximum number of occurrences of a voter.

\textit{Remark 2:}
We have identified four criteria for a preference profile to reach the price of anarchy. Among them (i) and (ii) impose that $a$ is one of the $O_x^{(\pi)}$ least preferred candidates  of voter $x$. According to Corollary \ref{cor:nTimes}, this implies that $a$ can never be elected for such a preference profile.
\

We have seen how strategic behavior can deteriorate social welfare compared to an ideal candidate. However, as Remark 2 indicates, this ideal candidate  cannot be elected may the vote be sincere or strategic. This leads us to define the \textit{sincerity ratio}, as we do next.

\subsection{Sincerity ratio}\label{subsec:bounds}

We now consider the sincerity ratio which, unlike the price of anarchy, uses as a reference point  the candidate elected if voters play sincerely in the sequential elimination game, and not the optimal candidate in the absolute sense. By measuring
the loss of social welfare when voters are strategic rather than sincere, this ratio is an appropriate tool to quantify the vulnerability of voting by sequential elimination to strategic behaviour.

Unlike the price of anarchy, there is no guarantee that the sincerity ratio should always be at least 1, and we will see that it is indeed the case: there are cases where selfish behavior is beneficial to social welfare.

Consider a $n$-voter preference profile $V$ over $m$ candidates, and a sequence $\pi$ of length $m-1$. 
For a given preference profile $V$, let $c$ and $b$ be the candidate elected with sincere voting and the candidate elected in the SPNE, respectively. 
We define the sincerity ratio as:
\begin{equation}\label{sincerityRatio}
    SR( \pi) = \max_V R_{cb}(\pi,V),  \;\mbox{where}\; R_{cb}(\pi,V) =\frac{S_B(c,V)}{S_B(b,V)}.
\end{equation}

\begin{example}
Let us come back to Example \ref{exampleFirstTheorem}. The candidate $c$ is the winner of the sincere vote, and $a$ the one of the strategic vote. $S_B(c,V)=7$ and $S_B(a,V)=6$, thus in this case $R_{cb}(\pi,V) = 7/6$. If we change the sequence $\pi$ to $\pi^{-1}=(3,2,1)$ then the result of the sincere vote for $\pi$ is equivalent to the result of the strategic vote for $\pi^{-1}$ and conversely. So $R_{cb}(\pi^{-1},V) = 6/7<1$ which means that in this case, strategic voting is beneficial to social welfare.
\end{example}

As illustrated by this example, $R_{cb}(\pi,V)$ is a measure of the relative impact on social welfare of sincere vote with respect to strategic vote.
$SR(\pi)$ defined in \eqref{sincerityRatio} is then just a measure of the maximal positive relative impact of sincere vote for a given sequence. One can estimate an upper bound for this quantity by means of the following theorem:

\begin{theorem}\label{theorem:sincerity-ratio}
With $n$ voters and $m$ candidates the sincerity ratio 
has the following upper bound:
\begin{equation}\label{sincerity-ratio}
SR(\pi) \leq \frac{O_{\rm{max}}+(n-1)(m-1)}{m}, 
\end{equation}
where $O_{\rm{max}}$ is the number of 
occurrences of the voter with the highest number of occurrences in the sequence.
\end{theorem}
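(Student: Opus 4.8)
The plan is to adapt the argument of Theorem~\ref{theorem:price-anarchy}: we bound $R_{cb}(\pi,V)$ for an arbitrary profile $V$ (with $c$ the sincere winner and $b$ the SPNE winner), and then take the maximum over $V$. If $c=b$ then $R_{cb}(\pi,V)=1$, and since for $n\ge 2$ we have $O_{\rm max}+(n-1)(m-1)\ge 1+(m-1)=m$, the bound holds trivially; so assume $c\neq b$. The first step is to exhibit a voter $x$ with $b\succ_{V_x}c$. By Theorem~\ref{firstTheorem}, $b$ is the winner of the sincere vote run with the reversed sequence $\pi^{-1}$; and no sincere sequential-elimination winner can be Pareto-dominated, because at the step where a Pareto-dominating candidate is eliminated the acting voter would still have a worse candidate available, contradicting sincerity. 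Hence $b$ is not Pareto-dominated by $c$, which yields the desired voter $x$.

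Next I would isolate the contribution of $V_x$:
\begin{equation*}
R_{cb}(\pi,V)=\frac{S_B(c,V-\{V_x\})+S_B(c,V_x)}{S_B(b,V-\{V_x\})+S_B(b,V_x)} .
\end{equation*}
As in the price-of-anarchy proof, increasing the rank gap between $b$ and $c$ inside $V_x$ can only decrease this ratio, so it suffices to treat the case where $b$ sits directly above $c$ in $V_x$, i.e.\ $S_B(b,V_x)=S_B(c,V_x)+1$; put $k=S_B(c,V_x)$. Now Corollary~\ref{cor:nTimes} is used twice. Applied to the SPNE winner $b$ it gives $r(b,V_i)\le m-O^{(\pi)}_i$, hence $S_B(b,V_i)\ge O^{(\pi)}_i$ for every $i$; summing over $i\neq x$ and using $\sum_i O^{(\pi)}_i=m-1$ gives $S_B(b,V-\{V_x\})\ge (m-1)-O^{(\pi)}_x$. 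Applied (through the sincere-vote version explained right after Corollary~\ref{cor:nTimes}) to the sincere winner $c$ at voter $x$, it gives $r(c,V_x)\le m-O^{(\pi)}_x$, i.e.\ $k\ge O^{(\pi)}_x$. Finally $S_B(c,V-\{V_x\})\le (n-1)(m-1)$, since each voter other than $x$ can rank $c$ first.

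Substituting these bounds and cancelling,
\begin{equation*}
R_{cb}(\pi,V)\ \le\ \frac{(n-1)(m-1)+k}{\bigl(m-1-O^{(\pi)}_x\bigr)+k+1}\ =\ \frac{(n-1)(m-1)+k}{m-O^{(\pi)}_x+k} ;
\end{equation*}
write $g(k)$ for this last expression; it remains to maximise $g$ over $k\ge O^{(\pi)}_x$. The sign of $g'$ equals the sign of $(m-O^{(\pi)}_x)-(n-1)(m-1)$, which for $n\ge 2$ is $\le (m-1)-(m-1)=0$ whenever $O^{(\pi)}_x\ge 1$, and is also $\le 0$ when $O^{(\pi)}_x=0$ and $n\ge 3$ (there $(n-1)(m-1)\ge 2(m-1)\ge m$). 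The only remaining configuration, $n=2$ with $O^{(\pi)}_x=0$, cannot occur: then $x$ never acts, the other voter fills the whole sequence, $\pi$ is constant, $\pi=\pi^{-1}$, and the sincere and strategic winners coincide, contradicting $c\neq b$. Therefore $g$ is non-increasing, so $g(k)\le g\bigl(O^{(\pi)}_x\bigr)=\frac{(n-1)(m-1)+O^{(\pi)}_x}{m}$, which is increasing in $O^{(\pi)}_x$ and hence at most $\frac{(n-1)(m-1)+O_{\rm max}}{m}$. Taking the maximum over $V$ proves the theorem.

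As with Theorem~\ref{theorem:price-anarchy}, the delicate point is the reduction to the extremal shape of $V_x$: one must verify that the ratio moves monotonically in the expected direction in each free parameter (the $b$--$c$ rank gap, the two scores on $V-\{V_x\}$, and $k$), while noting that the monotonicity in $k$ flips sign exactly when $R_{cb}$ crosses $1$, which is why the degenerate ``$x$ never votes'' case must be isolated (and then discarded via Theorem~\ref{firstTheorem}). Note also that, in contrast with the price of anarchy, only an upper bound is claimed here; constructing a profile that (nearly) attains it is the more technical half, and---following the announcement in the proof of Theorem~\ref{theorem:price-anarchy}---I would defer it to the appendix.
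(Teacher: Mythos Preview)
Your proof is correct and follows essentially the same line as the paper's (which, in Appendix~\ref{appendix:sec:SRproof}, just says ``this proof being similar to the one of Theorem~\ref{theorem:price-anarchy}'' and records only the differences). You are in fact more careful on two points that the paper elides: (i) you justify the existence of the voter $x$ with $b\succ_{V_x}c$ via Pareto-non-domination of the sincere winner on the reversed sequence, whereas the paper simply asserts it; and (ii) you handle the monotonicity in $k$ by first substituting the bounds on the off-$x$ scores and then checking the sign of $g'$, which sidesteps the issue that $S_B(c,V)\ge S_B(b,V)$ is not automatic here (unlike for the Borda winner in Theorem~\ref{theorem:price-anarchy}); the paper's ``same steps'' implicitly relies on assuming $R_{cb}>1$, which is harmless but unstated. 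Your treatment of the degenerate case $n=2$, $O^{(\pi)}_x=0$ is an extra nicety. The closing remark about deferring tightness to the appendix matches exactly what the paper does.
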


This proof is similar to the one of \autoref{theorem:price-anarchy}. We highlight the main differences in Appendix \ref{appendix:sec:SRproof}. We then give a characterization of the structure of instances for which the worst case ratio is reached in Appendix \ref{appendix:sec:struct}. 

\section{Experimental validation}\label{sec:average}
Until now we
focused on worst case scenarios. 
We now consider in more details the
general characteristics of the distributions of the above studied measures.

\subsection{Experiment Setup}

We have seen that in the worst case, the sincerity ratio and the price of anarchy can be in the order of $n$. We would like to know to what extent the ratios $R_{ab}(\pi,V)$ from Equation \eqref{basicRatio} and $R_{cb}(\pi,V)$ from Equation \eqref{sincerityRatio} obtained on average are far from the worst cases. Also, since our lower and upper bounds do not coincide for the sincerity ratio (while they do for the price of anarchy), we would like to compute the exact sincerity ratio in some typical cases to see how far it is from the upper bound. We address these two questions by numerical experiments.

When defining the expected sincerity ratio as its average over all possible profiles, we first consider the {\em impartial culture assumption}, which is classical in social choice theory. As it has been often criticized as not being realistic, in a second step we consider the Mallows culture \cite{mallows_ranking_1957} for which voters' preferences are correlated. 

In a first series of experiments, we test exhaustively the  $(m!)^{n-1}$ different combinations of possible preference profiles for $(n,m)=(2,8)$ and $(3,7)$.
This gives the exact values of the worst-case and average sincerity ratios. In the second series, we choose larger values: $n=5$ and $m=10$. In this case we cannot be exhaustive and proceed by Monte-Carlo sampling.

In each table presenting an exhaustive test over all the profiles (\autoref{tab:expPOA1}, Table \ref{tab:expSR1}), for each chosen sequence we display:
\begin{itemize}
    \item the mean ratio 
    ($R_{ab}$ or $R_{cb}$), obtained by averaging over all profiles. We also show its standard deviation.
    \item the  worst case value of the ratio. 
\end{itemize}
In Tables \ref{tab:expSR1} and \ref{tab:expSR2} we also display  the worst-case theoretical upper bound of $SR(\pi)$ obtained in Equation \eqref{sincerity-ratio}. We denote this upper bound as U.B.
For each table using Monte-Carlo sampling (Tables \ref{tab:expPOA2} and \ref{tab:expSR2}), we show for each chosen sequence the mean value of the studied ratio under Mallows culture $M_{\phi}$ with different values of $\phi$, including Impartial Culture, obtained for $\phi = 1$.

For $(n,m)=(5,10)$ we tested Mallows with different values of the dispersion parameter $\phi$. As it turns out, from $\phi=0.1$ to $\phi=0.5$, the mean value of both $R_{ab}$ and $R_{cb}$ is always equal to 1 and the standard-deviations to 0. So it appears that in case of a highly correlated culture, the elected candidate is the candidate with the highest Borda Score. This is the case for both sincere and strategic voting.

\subsection{Price of Anarchy, in Practice}\label{subsec:poa_exp}
 
On \autoref{tab:expPOA1} we present for a few sequences the mean value of the ratio $R_{ab}$ \eqref{basicRatio} with its standard-deviation, and its maximum value reached ,which correspond to the $PoA$ \eqref{price-anarchy}. We computed these values after testing each one of the $8!$ 
and $(7!)^2$
different preference profiles.

\begin{table}
\centering
\caption{Mean value of $R_{ab}(\pi)$ as in \eqref{basicRatio}, standard-deviation, and maximum value reached}
\begin{tabular}{c|c|c|c}
$(n,m)$ & sequence & $\langle R_{ab} \rangle$ & $PoA$ \\ \hline
(2,8) & 1112221  & $1.02 \pm 0.05$ & $10/7$ \\
(2,8) & 1222111  & $1.03 \pm 0.07$ & $10/7$ \\
(2,8) & 1122111  & $1.06 \pm 0.1$ & $11/7$ \\
(3,7) & 123123  & $1.06 \pm 0.13$ & $13/6$ \\  
(3,7) & 123321 & $1.06 \pm 0.13$ & $13/6$ \\  
(3,7) & 111223  & $1.06 \pm 0.13$ & $14/6$ \\
(3,7) & 112233  & $1.07 \pm 0.13$ & $13/6$ \\

\end{tabular}
\label{tab:expPOA1}
\end{table}

The average value of $R_{ab}$ is close to 1 and its standard deviation is small.  
As the number of voters and the $PoA$ gets bigger, the mean value of $R_{ab}$ increases slightly. 
In order to try and confirm this trend we studied larger sequences presented as best sequences Borda score-wise in \cite{bouveret_voting_2017}. The results are presented in \autoref{tab:expPOA2}, which displays the mean value of $R_{ab}$ with Mallows for $\phi=1$ (Impartial Culture) and $\phi=0.6$. In both cases, for estimating the mean we sampled $10!$ (that is, around $10^6$) profiles. Our numerical experiment suggests that indeed increasing $n$ and $m$ leads to an increase of the mean and standard deviation. 

\begin{table}
\centering
\caption{Mean value of $R_{ab}(\pi,V)$ and standard deviation, for Impartial Culture and Mallows culture with $\phi=0.6$.}
\begin{tabular}{c|c|c|c}
$(n,m)$ & sequence & $\langle R_{ab} \rangle_{IC}$ & $\langle R_{ab} \rangle_{M_{0.6}}$  \\ 
\hline
(5,10) & 112321345  & $1.1 \pm 0.16$ 
& $1.03 \pm 0.07$  \\ 
(5,10) & 123114235  & $1.1 \pm 0.16$ 
& $1.03 \pm 0.07$ \\ 
(5,10) & 123451243  & $1.1 \pm 0.16$ 
& $1.03 \pm 0.12$\\ 
(5,10) & 111222345  & $1.05 \pm 0.16$ 
& $1.03 \pm 0.07$ \\ 
\end{tabular}
\label{tab:expPOA2}
\end{table}

To have a more in-depth understanding of how the culture influences $R_{ab}$ we exhibit in the Appendix \ref{appendix:subsec:poaVisualisation} in Figure \ref{appendix:fig:hist_mallows_poa} two overlapping histograms with Mallows culture and different dispersion parameters.

\subsection{Sincerity Ratio, in Practice}\label{subsec:sr_exp}

In this part we analyse the same sequences as in \autoref{subsec:poa_exp}. In \autoref{tab:expSR1} we present the mean ratio $R_{cb}$ with its standard-deviation, the maximum value reached, and the theoretical upper bound \eqref{sincerity-ratio}, denoted as U.B. We computed these values after enumerating all profiles. Clearly max $R_{cb}=SR(\pi)$.
It is interesting to note that in practice both sequences $\pi_1 = (1, 1, 1, 2, 2, 2, 1)$ and $\pi_2 = (1, 2, 2, 2, 1, 1, 1)$ reach different maxima for the ratio of sincerity (although, by \autoref{theorem:sincerity-ratio}, they have the same upper bound). \footnote{Also, since $\pi_1$ and $\pi_2$ are reversed one with respect to the other, the maximum reached value of $R_{cb}$ by one of them is the minimum of the other (see Remark 3).}

\begin{table}[t]
\caption{Mean value, standard-deviation, and maximum value of $R_{cb}(\pi,V)$ for some sequences.}
\begin{tabular}{c|c|c|cc}
$(n,m)$ & sequence & $\langle R_{cb} \rangle$ & $SR$ & U.B.\\ \hline
(2,8) & 1112221  & $1.02 \pm 0.07$ & $11/8$ & 11/8\\
(2,8) & 1222111  & $0.99 \pm 0.06$ & $10/8$ & 11/8 \\
(2,8) & 1122111  & $0.99 \pm 0.04$ & $9/8$ & 12/8\\
(3,7) & 123123  & $1.01 \pm 0.12$ & $14/7$ & 14/7\\  
(3,7) & 123321 & $1 \pm 0$ & $1$ & 14/7 \\  
(3,7) & 111223  & $1.05 \pm 0.18$ & $15/7$ & 15/7\\
(3,7) & 112233  & $1.01 \pm 0.15$ & $14/7$ & 14/7\\
\end{tabular}
\centering
\label{tab:expSR1}
\end{table}

The same shortcomings of the upper bound \eqref{sincerity-ratio} is also revealed by the fact that the sequence $\pi_3 = (1, 1, 2, 2, 1, 1, 1)$  has supposedly a higher upper bound than $\pi_2$ but that in fact the sincerity ratio of $\pi_3$ is inferior to the one of $\pi_2$.

As predicted by Corollary \ref{cor:palindromic}, the palindromic sequence gives a steady mean of 1 with zero standard-deviation.
A notable point is that for some of the sequences the mean ratio is less than 1. This means that, on average for these sequences, {\em strategic voting is more favorable to social welfare than sincere voting}. Of course, it is understood on the basis of Theorem \ref{firstTheorem} that if this impact is beneficial on average for a given sequence, it is on average detrimental for the reversed sequence, as illustrated in  \autoref{appendix:fig:hist_SR_pi1-pi2} of Appendix \ref{appendix:subsec:srVisualisation}.

The maximum value of $R_{cb}$ is often well above average, which means that (i) most of the time strategic voting is only mildly detrimental to social welfare, but (ii) for a few specific profiles it incurs a high loss. The histograms presented in \autoref{appendix:fig:hist_SR_pi1-pi2} in Appendix \ref{appendix:subsec:srVisualisation} support this claim.

Also it is worth mentioning that all sequences reaching the upper bound (U.B.) of $SR$ respect the structure detailed in Appendix \autoref{appendix:sec:struct}.

\autoref{tab:expSR2} displays the mean sincerity ratio with its standard deviation depending on the culture for a few sequences. Like previously, we computed these values after testing $10!$ profiles.

\begin{table}[h!]
\centering
\caption{Mean value of $R_{cb}(\pi,V)$ and its standard-deviation according to different cultures.}
\begin{tabular}{c|c|c|cc}
$(n,m)$ & sequence & $\langle R_{cb} \rangle_{IC}$ & $\langle R_{cb} \rangle_{M_{0.6}}$  &  U.B.\\ \hline
(5,10) & 112321345  & $1.04 \pm 0.19$ & $1 \pm 0.08$ & 39/10 \\  
(5,10) & 123114235  & $1.09 \pm 0.17$ & $1 \pm 0.06$ & 39/10 \\ 
(5,10) & 123451243  & $1.01 \pm 0.15$ & $0.98 \pm 0.1$ & 38/10\\
(5,10) & 111222345  & $1.05 \pm 0.2$ & $1 \pm 0.08$ & 39/10\\ 
\end{tabular}
\label{tab:expSR2}
\end{table}

As also seen in \autoref{tab:expSR1}, the average sincerity ratio is always close to 1, indicating that, except for a few profiles, the impact of strategic behavior on social welfare is low. And similarly as for the ratio $R_{ab}$, the ratio $R_{cb}$ gets closer to 1 with less standard deviation when $\phi$ decreases. We present in Appendix \ref{appendix:subsec:srVisualisation} \autoref{appendix:fig:hist_mallows_sr} a visualisation of this phenomena with an histogram of the first sequence of \autoref{tab:expSR2} for different value of $\phi$.

\section{Conclusion}\label{sec:conclu}

We have formalized voting by sequential elimination as a turn-taking game. Our take-home messages are:

\begin{enumerate}
\item The price of anarchy is in the order of the number of agents $n$ as is the upper bound \eqref{sincerity-ratio} of the  sincerity ratio. This is high, but we recall that such games are reasonable only for a small number of players \cite{bouveret_voting_2017}.
\item On average, strategic voting is much less detrimental to social welfare than the worst case scenario. 
This positive effect is amplified for voters with correlated preferences.
\item The outcome of all subgame-perfect Nash equilibria of a sequential elimination game is unique and has a simple characterization and is polynomial-time computable. 
\item Strategic behaviour sometimes increases social welfare (as compared to a sequential elimination game played sincerely).
\end{enumerate}

Another reason to temper message 1 is that, as we recall, we assumed complete knowledge and perfect rationality. Neither is realistic in practice. To know better what would happen in a more realistic context we would think of performing lab experiments.

\bibliographystyle{splncs04}
\bibliography{main}

\begin{thebibliography}{10}
\providecommand{\url}[1]{\texttt{#1}}
\providecommand{\urlprefix}{URL }
\providecommand{\doi}[1]{https://doi.org/#1}

\bibitem{Anbarci06}
Anbarci, N.: {Finite Alternating-Move Arbitration Schemes and the Equal Area
  Solution}. Theory and Decision  \textbf{61}(1),  21--50 (August 2006).
  \doi{10.1007/s11238-005-4748-9},
  \url{https://ideas.repec.org/a/kap/theord/v61y2006i1p21-50.html}

\bibitem{AnshelevichF0V21}
Anshelevich, E., Filos{-}Ratsikas, A., Shah, N., Voudouris, A.A.: Distortion in
  social choice problems: The first 15 years and beyond. In: Proceedings of the
  Thirtieth International Joint Conference on Artificial Intelligence, {IJCAI}
  2021. pp. 4294--4301 (2021)

\bibitem{AzizBLM17}
Aziz, H., Bouveret, S., Lang, J., Mackenzie, S.: Complexity of manipulating
  sequential allocation. In: Proceedings of the Thirty-First {AAAI} Conference
  on Artificial Intelligence. pp. 328--334 (2017)

\bibitem{AzizGT17}
Aziz, H., Goldberg, P., Walsh, T.: Equilibria in sequential allocation. In:
  Algorithmic Decision Theory - 5th International Conference, {ADT} 2017,
  Luxembourg, Luxembourg, October 25-27, 2017, Proceedings. pp. 270--283 (2017)

\bibitem{BoutilierR16}
Boutilier, C., Rosenschein, J.S.: Incomplete information and communication in
  voting. In: Handbook of Computational Social Choice, pp. 223--258 (2016)

\bibitem{bouveret_voting_2017}
Bouveret, S., Chevaleyre, Y., Durand, F., Lang, J.: Voting by sequential
  elimination with few voters. In: Proceedings of the 26th International Joint
  Conference on Artificial Intelligence. p. 128–134. IJCAI'17, AAAI Press
  (2017)

\bibitem{bouveret_general_2011}
Bouveret, S., Lang, J.: A general elicitation-free protocol for allocating
  indivisible goods. In: Proceedings of the Twenty-Second International Joint
  Conference on Artificial Intelligence. p. 73–78. IJCAI'11, AAAI Press
  (2011)

\bibitem{BouveretL14}
Bouveret, S., Lang, J.: Manipulating picking sequences. In: {ECAI} 2014 - 21st
  European Conference on Artificial Intelligence, 18-22 August 2014, Prague,
  Czech Republic - Including Prestigious Applications of Intelligent Systems
  {(PAIS} 2014). pp. 141--146 (2014)

\bibitem{BranzeiCMP13}
Br{\^{a}}nzei, S., Caragiannis, I., Morgenstern, J., Procaccia, A.D.: How bad
  is selfish voting? In: Proceedings of the Twenty-Seventh {AAAI} Conference on
  Artificial Intelligence (2013)

\bibitem{ChenL020}
Chen, X., Li, M., Wang, C.: Favorite-candidate voting for eliminating the least
  popular candidate in a metric space. In: The Thirty-Fourth {AAAI} Conference
  on Artificial Intelligence, {AAAI} 2020. pp. 1894--1901 (2020)

\bibitem{DesmedtE10}
Desmedt, Y., Elkind, E.: Equilibria of plurality voting with abstentions. In:
  Proceedings 11th {ACM} Conference on Electronic Commerce (EC-2010),
  Cambridge, Massachusetts, USA, June 7-11, 2010. pp. 347--356 (2010)

\bibitem{ElkindGRS15}
Elkind, E., Grandi, U., Rossi, F., Slinko, A.: Gibbard-satterthwaite games. In:
  Proceedings of the Twenty-Fourth International Joint Conference on Artificial
  Intelligence, {IJCAI} 2015. pp. 533--539 (2015)

\bibitem{farquharson_theory_1969}
Farquharson, R.: Theory of Voting. International standard, Yale University
  Press (1969)

\bibitem{GrossAX17}
Gross, S., Anshelevich, E., Xia, L.: Vote until two of you agree: Mechanisms
  with small distortion and sample complexity. In: Proceedings of the
  Thirty-First {AAAI} Conference on Artificial Intelligence. pp. 544--550
  (2017)

\bibitem{kalinowski_social_nodate}
Kalinowski, T., Nardoytska, N., Walsh, T.: A social welfare optimal sequential
  allocation procedure. In: IJCAI International Joint Conference on Artificial
  Intelligence (04 2013)

\bibitem{kalinowski_strategic_2013}
Kalinowski, T., Narodytska, N., Walsh, T., Xia, L.: Strategic behavior when
  allocating indivisible goods sequentially (2013)

\bibitem{KavnerXia21}
Kavner, J., Xia, L.: Strategic behavior is bliss: Iterative voting improves
  social welfare. CoRR  \textbf{abs/2106.08853} (2021)

\bibitem{kohler_class_1971}
Kohler, D.A., Chandrasekaran, R.: A class of sequential games. Operations
  Research  \textbf{19}(2),  270--277 (1971)

\bibitem{koutsoupias_worst-case_2009}
Koutsoupias, E., Papadimitriou, C.: Worst-case equilibria. In: Proceedings of
  the 16th Annual Conference on Theoretical Aspects of Computer Science. p.
  404–413. STACS'99, Springer-Verlag, Berlin, Heidelberg (1999)

\bibitem{mallows_ranking_1957}
Mallows, C.L.: Non-null ranking models. Biometrika  \textbf{44}(1-2),  114--130
  (1957)

\bibitem{Meir17}
Meir, R.: Iterative voting. In: Endriss, U. (ed.) Trends in Computational
  Social Choice, pp. 69--86 (2017)

\bibitem{TominagaTY16}
Tominaga, Y., Todo, T., Yokoo, M.: Manipulations in two-agent sequential
  allocation with random sequences. In: Proceedings of the 2016 International
  Conference on Autonomous Agents {\&} Multiagent Systems, Singapore, May 9-13,
  2016. pp. 141--149 (2016)

\bibitem{Walsh16}
Walsh, T.: Strategic behaviour when allocating indivisible goods. In:
  Proceedings of the Thirtieth {AAAI} Conference on Artificial Intelligence.
  pp. 4177--4183 (2016)

\bibitem{XiaC10a}
Xia, L., Conitzer, V.: Stackelberg voting games: Computational aspects and
  paradoxes. In: Proceedings of the Twenty-Fourth {AAAI} Conference on
  Artificial Intelligence (2010)

\bibitem{xia_strategic_2011}
Xia, L., Conitzer, V., Lang, J.: Strategic sequential voting in multi-issue
  domains and multiple-election paradoxes. In: Proceedings of the 12th {ACM}
  conference on {Electronic} commerce - {EC} '11. p.~179. ACM Press, San Jose,
  California, USA (2011)

\end{thebibliography}

\clearpage

\section*{Appendices}
\appendix

\section{Strategic \textit{and} Sincere voters} \label{appendix:sec:sincstratvoters}

We have seen a way to determine the outcome of a strategic vote with \autoref{firstTheorem} knowing the the preferences of each voters. We now examine the case of a vote with strategic and sincere voters.
Let us give an example of how this kind of vote unfolds itself.

\begin{example}
Let $n = 3$, $m = 4$, $V = (abcd, cbad, bcad)$ and $\pi=(1,2,3)$. Voters 1 and 3 are sincere and voter 2 is strategic. The game goes as follows: 1 eliminates candidate $d$, then 2 strategically eliminates candidate $b$ and finally voter 3 eliminates $a$. The winner of the election is $c$.
Note that if voter 2 had voted sincerely, candidate $b$ would have been the winner of the election.
\end{example}

We call $\pi_{sincere}$ the {\em subsequence} composed of all the sincere voters and $\pi_{strategic}$ the one composed of all strategic voters. For instance, if in a sequence $\pi=(1,2,3,4)$ the voters 2 and 3 are sincere then $\pi_{sincere}=(2,3)$ and $\pi_{strategic}=(1,4)$.

\begin{theorem} \label{appendix:secondTheorem}
The result of a vote by sequential elimination with both strategic and sincere voters can be found by extracting the subsequence with only sincere voters, executing it, then reversing the remaining subsequence and executing it sincerely.
\end{theorem}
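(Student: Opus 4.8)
\emph{Proof idea.}
The plan is to reduce this statement to \autoref{firstTheorem} by ``sorting'' the elimination sequence: I would move every sincere voter in front of every strategic voter, keeping the sincere voters in their original relative order and the strategic voters in theirs, so that $\pi$ is rewritten as $\pi_{sincere}$ followed by $\pi_{strategic}$. Three things then need to be checked: (i) this reordering does not change the SPNE outcome; (ii) once the sequence starts with the whole block $\pi_{sincere}$, those first moves are deterministic --- a sincere voter at the front of the remaining sequence simply deletes its least preferred surviving candidate --- so running them transforms $C$ into a fixed reduced set $C'$; (iii) the residual game on $C'$ has only strategic voters, so \autoref{firstTheorem} applies and its outcome equals that of $(\pi_{strategic})^{-1}$ played sincerely on $C'$. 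Chaining (ii) and (iii) gives exactly the recipe in the statement: run the sincere subsequence sincerely, then run the reversed strategic subsequence sincerely.

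The crux, and the step I expect to be the main obstacle, is (i), which I would derive from a single \emph{swap lemma}: if position $t$ holds a strategic voter $p$ and position $t+1$ holds a sincere voter $q$, then exchanging them leaves the SPNE outcome unchanged. To prove it, fix any state reached just before step $t$, with set of survivors $D$ (so $|D|\ge 3$), and observe that the continuation after steps $t$ and $t+1$ depends only on which $|D|-2$ candidates are left, through one and the same function $\phi$ in both orderings. If $q$ moves first it deletes $\delta=\mathrm{worst}_q(D)$ and then $p$ picks a victim in $D\setminus\{\delta\}$, so the outcomes $p$ can reach form the family $\{\phi(D\setminus\{\delta,d\}) : d\in D\setminus\{\delta\}\}$. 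If $p$ moves first: deleting any $d\neq\delta$ still leaves $\delta$ as $q$'s worst, so $q$ deletes $\delta$ and the outcome is $\phi(D\setminus\{d,\delta\})$; deleting $\delta$ itself forces $q$ to delete its second-worst $\delta'\in D\setminus\{\delta\}$, giving $\phi(D\setminus\{\delta,\delta'\})$, which already belongs to the same family. Hence $p$ faces the very same set of reachable outcomes in both orderings, and since its preferences are fixed it secures the same one; backward induction over the game tree then yields the lemma. This is also the point where a naive ``delete the last voter's worst candidate and drop the last move'' back-to-front induction breaks down: when a sincere voter earlier in the sequence is forced to delete the candidate such an induction wants to protect, it is precisely the option-set equality above that still pins the outcome down.

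Granting the swap lemma, (i) is routine: reading the sincere/strategic labels of $\pi$ as a binary string, every adjacent ``strategic-then-sincere'' pair can be flipped by one swap, each flip strictly decreases the number of such inversions, the process terminates exactly when the sequence is $\pi_{sincere}\cdot\pi_{strategic}$, and since no swap ever exchanges two voters of the same type the order inside each block is preserved. Step (ii) is immediate from the definition of a sincere voter (its action does not depend on the rest of the tree), so after $|\pi_{sincere}|$ moves we are left with the all-strategic game on $C'$; step (iii) is a direct application of \autoref{firstTheorem} to that game. Composing the three steps proves the theorem.
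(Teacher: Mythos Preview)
Your argument is correct, and it is more careful than the paper's. The paper argues in two sentences that a strategic voter ``has no incentive'' to pre-empt a sincere voter's elimination (because doing so spends a move on a candidate that would disappear anyway), concludes that the sincere eliminations are therefore fixed, processes them first, and then invokes Theorem~\ref{firstTheorem} on the strategic residue. You make the same redundancy observation precise and local: your swap lemma shows that when a strategic $p$ sits just before a sincere $q$, the multiset of continuation outcomes $p$ can reach is literally unchanged by swapping them (the ``waste'' move $d=\delta$ just duplicates the option $d=\delta'$), so the SPNE value at that node is invariant; you then bubble-sort all sincere voters to the front and finish with Theorem~\ref{firstTheorem}. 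What you gain is rigour: the paper's phrasing ``the votes of the sincere voters will not be changed'' is not literally true in every SPNE (a strategic voter \emph{may} choose the redundant move), whereas your option-set equality shows the \emph{outcome} is unchanged regardless, and your bubble-sort step cleanly justifies why one may reorder the whole sequence rather than just reason about a single pair. What the paper's version buys is brevity: once one believes the ``wasted move'' intuition, the conclusion is immediate without tracking swaps.
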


\begin{proof}
To change the vote of a sincere voter, a strategic voter will have to eliminate a candidate that would have been otherwise eliminated by the sincere voter. Hence, the strategic voters have no incentive to change the sincere voters' vote. 

Let us extract the subsequence with only sincere voters and execute it, because the votes of the sincere voters will not be changed by those of the strategic voters (who have no interest in doing so). Now, only the strategic voters remain and we know that executing a subsequence with strategic voters is equivalent to reversing the sequence and executing it with sincere voting (by Theorem \ref{firstTheorem}).
\end{proof}

{\em Remark: } If two sequences have the same sincere subsequence and the same strategic subsequence, the same candidate is elected, irrespective of how the two subsequences are interleaved.

\begin{example}
Let $n = 4$, $m = 6$ and
$V = (abcdef, edcbaf,$ $fdebca, afecdb)$. Assume voters 1 and 3 are strategic, and voters 2 and 4 are sincere. 
If we wish to find the winner of the vote with the sequence $\pi_1=(1,2,3,4,4)$ we follow the \autoref{appendix:secondTheorem} procedure:
\begin{itemize}
    \item We sincerely execute the subsequence $\pi_{1-sincere}=(2,4,4)$ which eliminates $f,b$ and $d$.
    \item Then we take the remaining subsequence $\pi_{1-strategic}=(1,3)$, we reverse it which gives us $(3,1)$, which we execute sincerely, leading to the elimination of $a$ then $e$.
\end{itemize}
This gives us $c$ as the winner.

Now let us take the sequence $\pi_2=(2,4,4,1,3)$, we can see that the procedure will give the same result. Similarly for $\pi_3=(2,4,1,4,3)$, $\pi_4=(2,4,1,3,4)$, etc.
\end{example}

\section{Sincerity Ratio proof}\label{appendix:sec:SRproof}

For the sake of readability we repeat \autoref{theorem:sincerity-ratio}.

\setcounter{theorem}{2}
\begin{theorem}\label{appendix:theorem:sincerity-ratio}
With $n$ voters and $m$ candidates the sincerity ratio 
has the following upper bound:
\begin{equation}\label{appendix:sincerity-ratio}
SR(\pi) \leq \frac{O_{\rm{max}}+(n-1)(m-1)}{m}, 
\end{equation}
where $O_{\rm{max}}$ is the number of 
occurrences of the voter with the highest number of occurrences in the sequence.
\end{theorem}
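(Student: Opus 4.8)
The plan is to replay the proof of \autoref{theorem:price-anarchy} almost verbatim, with the Borda winner $a$ replaced by the sincere-voting winner $c$, and to isolate the single place where the two arguments genuinely diverge. Fix an elimination sequence $\pi$ and a profile $V$; let $b$ be the SPNE winner and $c$ the winner of the sincere vote. If $R_{cb}(\pi,V)\le 1$ there is nothing to prove, since for $n\ge 2$ the bound $\frac{O_{\rm{max}}+(n-1)(m-1)}{m}$ is at least $1$ (and for $n=1$ there is no strategic behaviour, so $SR(\pi)=1$ and the bound trivially holds). So I may assume $R_{cb}(\pi,V)>1$; in particular $b\ne c$ and $S_B(c,V)>S_B(b,V)$.

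The first, minor, departure from the price-of-anarchy argument is the production of a voter $x$ with $b\succ_{V_x}c$. By \autoref{firstTheorem}, $b$ is the sincere outcome of $\pi^{-1}$; if every voter preferred $c$ to $b$, then in every sincere run of $\pi^{-1}$ the candidate $c$ is never eliminated while $b$ is still present, so $b$ would not be the last survivor, a contradiction. Fix such an $x$ and decompose $R_{cb}=\frac{S_B(c,V-\{V_x\})+S_B(c,V_x)}{S_B(b,V-\{V_x\})+S_B(b,V_x)}$. Set $k=S_B(c,V_x)$, $\alpha=S_B(c,V-\{V_x\})$ and $\beta=1+S_B(b,V-\{V_x\})$. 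Since $b\succ_{V_x}c$ we have $S_B(b,V_x)\ge k+1$ and hence $S_B(b,V)\ge\beta+k$; this yields at once $R_{cb}(\pi,V)\le f(k)$ with $f(k)=\frac{\alpha+k}{\beta+k}$, and, combined with $S_B(c,V)>S_B(b,V)$, the inequality $\alpha>\beta$, so $f$ is decreasing.

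The real difference lies in the lower bound on $k$. In \autoref{theorem:price-anarchy} the Borda winner is under no survival constraint, which only gives $S_B(a,V_x)\ge O^{(\pi)}_x-1$. Here, again by \autoref{firstTheorem}, $c$ is the SPNE outcome of $\pi^{-1}$, in which voter $x$ still appears $O^{(\pi)}_x$ times; so Corollary~\ref{cor:nTimes} keeps $c$ out of voter $x$'s $O^{(\pi)}_x$ lowest positions, i.e. $r(c,V_x)\le m-O^{(\pi)}_x$, i.e. $k=S_B(c,V_x)\ge O^{(\pi)}_x$, one unit larger than the analogous bound above. The rest is exactly as in the price-of-anarchy proof: $\alpha=\sum_{i\ne x}S_B(c,V_i)\le(n-1)(m-1)$, while $\beta=1+\sum_{i\ne x}S_B(b,V_i)\ge 1+\sum_{i\ne x}O^{(\pi)}_i=m-O^{(\pi)}_x$ by Corollary~\ref{cor:nTimes} applied to $b$. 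Since $f$ is decreasing and $k\ge O^{(\pi)}_x$, one concludes $R_{cb}(\pi,V)\le f(O^{(\pi)}_x)\le\frac{(n-1)(m-1)+O^{(\pi)}_x}{m}\le\frac{(n-1)(m-1)+O_{\rm{max}}}{m}$, using $\alpha\le(n-1)(m-1)$ and $\beta+O^{(\pi)}_x\ge m$. This is the claimed bound; the extra unit on $k$ is precisely what replaces the price-of-anarchy denominator $m-1$ by $m$, since through $S_B(b,V_x)\ge k+1$ it forces $S_B(b,V)\ge m$.

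I expect the only step needing care to be the inequality $R_{cb}(\pi,V)\le f(k)$. Rather than arguing that the extremal profile must put $b$ immediately above $c$ in $V_x$ (which would require checking that such a rearrangement keeps $b$ and $c$ the respective winners), I would leave the profile untouched and simply observe that the true gap $r(c,V_x)-r(b,V_x)\ge 1$ can only enlarge the denominator beyond $\beta+k$, so the inequality holds outright. Everything else follows \autoref{theorem:price-anarchy} line for line, and it still remains (deferred to Appendix~\ref{appendix:sec:struct}) to exhibit a profile showing that the bound is essentially attained.
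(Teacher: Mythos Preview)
Your proof is correct and follows essentially the same approach as the paper's: decompose $R_{cb}$ through a voter $x$ with $b\succ_{V_x}c$, study the decreasing function $f(k)=(\alpha+k)/(\beta+k)$, and use Corollary~\ref{cor:nTimes} (via Theorem~\ref{firstTheorem} applied to $\pi^{-1}$) to obtain $k\ge O^{(\pi)}_x$, the extra unit over the price-of-anarchy case. Your choice to bound $R_{cb}\le f(k)$ directly from $S_B(b,V_x)\ge k+1$, rather than arguing that the extremal profile must place $b$ immediately above $c$ in $V_x$, is a small but welcome tightening of the paper's exposition, since it sidesteps the need to check that such a rearrangement preserves both winners.
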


\begin{proof} This proof being similar to the one of \autoref{theorem:price-anarchy}, we just highlight the main differences.
Let us consider $n$ voters with a profile preference $V$, $m$ candidates, and a sequence $\pi$ of
size $(m-1)$  for which the number of occurrences of the voter with the highest number of occurrences is $O_{\rm{max}}$. We want to determine
the maximum ratio $R_{cb}$ (Eq. \eqref{sincerityRatio})
over 
all sequences and preferences with fixed $n$, $m$ and $O_{\rm{max}}$.

For the same reason as in the previous proof, at least one of the voters ($x$) should prefer $b$ over $c$ ( $b \succ_{V_x} c$ ). 
Similarly as well, no matter which voter ($x$) prefers $b$ over $c$, to maximize the ratio we must increase 
the rank of $b$ and $c$ in her preferences.
However, the decrease should not prevent $c$ from being elected. This results in $r(c,V_x) = m-O^{(\pi)}_x$ (lower limit of the rank of an elected candidate according to corollary \ref{cor:nTimes}) 
and because $r(b,V_x) = r(c,V_x)-1$ we have $S_B(c,V_x) = S_B(b,V_x)-1 = O^{(\pi)}_x$.

Following the same steps as in the proof of \autoref{theorem:price-anarchy}, we now get an appropriate upper bound of $R_{cb}$: 
\begin{equation} \label{appendix:finalRatioSR}
\forall V, R_{cb}(\pi,V) \leq \frac{(n-1)(m-1)+O_{\rm{max}}}{m} \; .
\end{equation}
This yields the result \eqref{appendix:sincerity-ratio}.
\end{proof}
\

\section{General structure of an instance reaching the maximum sincerity ratio}\label{appendix:sec:struct}
After having exhibited a sequence reaching the sincerity ratio's upper bound \eqref{sincerity-ratio}, in the present subsection we specify the common features of all instances that reach the upper bound.

As previously seen, in the configurations that have the largest sincerity ratio, the candidate $c$ must be at rank $m-O_{\rm{max}}$, and the candidate $b$ at rank $m-O_{\rm{max}}-1$ for the voter with the highest number of occurrences in the sequence ($x$). Also, for every other voter ($i$), candidate $c$ must be at rank 1 and candidate $b$ at rank $m-O^{(\pi)}_i$.

To ensure that $c$ is elected in the sincere vote and $b$ in the strategic vote, the instance 
must meet certain conditions: 
\begin{itemize}
\item Sincere voting: since $r(c,V_x) = m-O_{\rm{max}}$, $c$ is preferred to $O_{\rm{max}}$ candidates in $V_x$. 
None of them should be eliminated by another voter, otherwise $x$ will eliminate $c$.\footnote{ $x$ voting $O_{\rm{max}}$ times, it is clear that in a sincere vote she will vote for her $O_{\rm{max}}$ least preferred candidates if they are not eliminated beforehand.}
\item Strategic voting: since  for all $i \neq x, r(b,V_i) = m-O^{(\pi)}_i$ there are $O^{(\pi)}_i$ less preferred candidates than $b$ in $V_i$.
None of them should be eliminated by another voter, otherwise $i$ will eliminate $b$.
\end{itemize}

Therefore, in the sincere vote, $x$ eliminates her $O_{\rm{max}}$ least preferred  candidates, to which $b$ does not belong [$r(b,V_x)=m-O_{\rm{max}}-1$]. This 
means that $b$ is eliminated by another voter. This voter must have a candidate least 
preferred than $b$ already eliminated to vote for $b$ [$\forall i\neq x$, $r(b,V_i)=m-O^{(\pi)}_i$, so $b$ never belongs to the $O^{(\pi)}_i$ least preferred candidates of voter $i$]. 

Moreover, in the strategic vote $x$ eliminates 
$c$ (every other voter has $c$ as their favorite candidate). And we know that strategic vote is 
equivalent to sincere vote with the sequence reversed (theorem \ref{firstTheorem}). Hence, with the 
sequence reversed, one voter different from $x$ must have eliminated one of $x$'s $O_{\rm{max}}$ least preferred candidates.

This means that $x$ and one other voter must have in common at least one candidate they both prefer less than $b$.
Because of the rank of $b$ in each voter's preferences, it is clear that the number of slots available for candidates least preferred than $b$ is $m$ (see Table \ref{appendix:table1}).
In the strategic vote $b$ is elected, which means that those $m$ slots contain $m-1$ different candidates. 
Thus, exactly one candidate is less preferred than $b$ for 2 voters.\footnote{For the record, one may remark that if every voter had different least preferred $O^{(\pi)}_i$ candidates,  then sincere and strategic vote would give the same outcome: the unique candidate which does not belong to one of the least preferred $O^{(\pi)}_i$ candidates.}
As shown above, one of these voters is $x$, let us denote the other one by $y$ and by $e$ the candidate they both prefer less than $b$.
This candidate $e$ should be eliminated by $x$ in the sincere vote and by $y$ in the strategic vote.

If these conditions are met, then the instance considered reaches the upper bound of the sincerity ratio  \eqref{sincerity-ratio}.

The order of the sequence of votes should be such that $x$ eliminates the candidate $e$ before $y$ does in the sincere vote but $y$ eliminates $e$ first in the 
strategic vote (which is a sincere vote with the sequence reversed). 

It is clear that the ranking of $e$ in the preferences of $x$ and $y$ has an impact on the orders of the votes of $x$ and $y$ in the sequence making it possible to reach the bound \eqref{sincerity-ratio}.

\begin{example}\label{appendix:example6}
Let us build a scenario to maximize the ratio 
\eqref{sincerityRatio} which reaches the upper bound \eqref{appendix:finalRatioSR}. Suppose we have 3 voters, 7 candidates and the voter with the most occurrences is voter 1, with 4 votes (voters 2 and 3 have 1 vote each).

Let us call $c_1$ the candidate winner of the sincere vote and $c_2$ the candidate winner of the strategic vote. To maximize the ratio \eqref{sincerityRatio}, the preferences should respect the configuration depicted in \autoref{appendix:table:example6a}.

\begin{table}
\centering
\caption{Preferences maximizing the ratio \eqref{sincerityRatio} for Example \ref{appendix:example6}.}
\begin{tabular}{cc}
Voter ($i$) & Preferences ($V_i$)\\ \hline
1 &  $\times \, c_2 \, c_1 \times \times \times \times $\\ 
2 & $c_1 \times \times \times \times \, c_2 \, \times$  \\
3 & $c_1 \times \times \times \times \, c_2 \, \times$  \\
\end{tabular}
\label{appendix:table:example6a}
\end{table}
Indeed, $c_2$ and $c_1$ have to be next to each other in the preference of the voter with the largest number of occurrences, and $c_2$ has to be at the lowest eligible rank for the two other voters.

Now to fill up the preferences in order for $c_2$ to be elected in the 
strategic vote and $c_1$ in the sincere vote we study each situation:\\
$\Rightarrow$ For the sincere vote, we want $c_1$ to be the winner.
To ensure that voter 1 does not eliminate $c_1$, voter 1 has to 
eliminate her least 4 preferred candidates herself. Afterwards, candidate $c_1$, 
being the preferred candidate of both voters 2 and 3, will be the winner of the sincere vote. \\
$\Rightarrow$ For the strategic vote we want $c_2$ to be the winner. We know thanks to the theorem \ref{firstTheorem} that the strategic vote is equivalent to a sincere vote with the sequence reversed. 
To ensure that voter 2 or 3 does not eliminate $c_2$, their least 
preferred candidate should not have been previously eliminated by someone else
(i.e. they must eliminate their least preferred candidate themselves in 
a sincere vote with the sequence reversed). This results in voters 2 and 
3 to have different least preferred candidates. The last condition is 
that voter 2 and 3 cannot both eliminate one of the least 4 
preferred candidates of voter 1 (otherwise voter 1 will eliminate $c_2$).
The corresponding preferences are represented in \autoref{appendix:table:example6b}. 

\begin{table}
\centering
\caption{preferences presented in \autoref{appendix:table:example6a}
 leading to elect $c_1$ in the sincere vote and $c_2$ in the strategic vote.}
\begin{tabular}{cl}
Voter ($i$) & Preferences ($V_i$)\\ \hline
1 & $c_3 \, c_2 \, c_1 \times \times \, c_4 \, \times $\\ 
2 & $c_1 \times \times \times \times \, c_2 \; c_3$\\
3 & $c_1 \times \times \times \times \, c_2 \; c_4$\\ 
\end{tabular}
\label{appendix:table:example6b}
\end{table}

The sequence $\pi$ has to respect two conditions: when reversed and executed sincerely, the voter 3 shall eliminate her worst candidate herself and in the sincere vote the voter 1 shall eliminate the worst candidate of voter 3.
The sequence $\pi = (1, 1, 2, 1, 3, 1)$ fulfills these conditions.

This instance will give as winner for the sincere vote the candidate $c_1$ with $S_B(c_1,V) = 16$. And the winner of the strategic vote will be the candidate $c_2$,  with $S_B(c_2,V) = 7$.

\end{example}

It is interesting to remark that the bound \eqref{appendix:finalRatioSR} is reached for a specific type of sequence described in details in \autoref{appendix:sec:struct}.
Let us give a concrete example with the following sequence :
\begin{equation}
\pi =  (\underbrace{x, \dots , x}_{O^{(\pi)}_x \text{ times}},\underbrace{i, \dots , i}_{O^{(\pi)}_i \text{ times}},  \dots , \underbrace{j, \dots , j}_{O^{(\pi)}_j \text{ times}} ) .
\end{equation}
where the voters are arranged from the voter with the highest number of occurrences ($x$) voting first to the voter with the least occurrences ($j$) voting last (hence here $O_{\rm{max}} = O^{(\pi)}_x \geq O^{(\pi)}_i \geq O^{(\pi)}_j$).

 We consider the preference profile depicted in \autoref{appendix:table1} (ordered from left to right from best to worst): $r(b,V_x)=m-O_{\rm{max}}-1$, and for all other voters (from $i$ to $j$), $r(b,V_i)=m-O^{(\pi)}_i ,\cdots,  r(b,V_j)=m-O^{(\pi)}_j$.

\begin{table}[h!]
    \centering
    \caption{Preferences in an instance reaching the sincerity ratio, $x$ is the voter with the largest number of occurrences ($O_{\rm{max}}$). We have $r(b,V_x)=m-O_{\rm{max}}-1$, and for all other voters (from $i$ to $j$) $r(b,V_i)=m-O^{(\pi)}_i , \cdots ,  \, r(b,V_j)=m-O^{(\pi)}_j$.}
    \begin{tabular}{cc}
        Voter & Preferences \\ \hline
        $x$ & $\dots$ $b\,c$ $\dots$ \\ 
        $i$ & $c$ $\dots$ $b$ $\dots$ \\ 
        $\vdots$ & $\vdots$ \\
        $j$ & $c$ $\dots$ $b$ $\dots$ \\  
        \end{tabular}
    \label{appendix:table1}
\end{table}

Also, a candidate should not appear with a ranking worse than $b$ in the preferences of more than one voter;
except for voter $j$ which has the same least preferred candidate as $x$.

With this preference profile and order of votes, 
$c$ is the winner of the sincere vote and $b$ is the winner of the strategic vote.  Besides, the corresponding Borda scores are 
$S_B(b,V) = \sum^n_{i=1} S_B(b,V_i) = \sum^n_{i=1} m - r(b,V_i) = 1 + \sum^n_{i=1} O^{(\pi)}_i = m $. And $S_B(c,V) = \sum^n_{i=1} 
S_B(c,V_i) = \sum^n_{i=1} m - r(c,V_i) = (n-1)(m-1) + m - ( m - O_{\rm{max}} )$.

This demonstrates that for this instance, the ratio 
\eqref{sincerityRatio} reaches the bound \eqref{appendix:finalRatioSR}.

Two remarks are in order here:

\textit{Remark 3:} If we reverse the sequence in an instance giving the maximum sincerity ratio it will give the minimum ratio of the reversed sequence. Indeed, if for a profile of preferences $V$ we have $S_B(c,V)/S_B(b,V)=\alpha$ as maximum ratio of sincerity for $\pi$, then for the reversed sequence: $S_B(b,V)/S_B(c,V)=\frac{1}{\alpha}$ will be the minimum ratio of sincerity reached.

\textit{Remark 4:}  At fixed number of candidates $m$, the largest value of the sincerity ratio \eqref{sincerity-ratio} is obtained  for $n=m-1$ and $O_{\rm{max}}=1$, i.e. when each voter votes once. The corresponding maximum value is $[1+(m-2)(m-1)]/m$.

\section{Experiment visualisation}\label{appendix:sec:expvisualisation}
\subsection{Price of Anarchy}\label{appendix:subsec:poaVisualisation}
To have a more in-depth understanding of how the culture influences $R_{ab}$ we exhibit in Figure \ref{appendix:fig:hist_mallows_poa} two overlapping histograms with Mallows cultures of different dispersion parameters $\phi$ for $(n,m)=(5,10)$, using the voting  sequence of the first row of \autoref{tab:expPOA2}. For each histogram we sampled $10!$ preference profiles.

The most significant feature of this histogram is the very large number of preference profiles for which $R_{ab}=1$, i.e., for which the strategic vote leads to elect the optimal candidate. Note in particular that the average $\langle R_{ab} \rangle$ is much lower than the price of anarchy, which in the present case is $PoA=38/9 \simeq 4.22$ (see also the results of Tables \ref{tab:expPOA1} and \ref{tab:expPOA2}).

In  \autoref{appendix:fig:hist_mallows_poa} we consider 
two different cultures with $\phi=0.6$ and 0.9.
The trend appears to be that, the smaller $\phi$ (i.e., the most correlated the voters' preferences), the smaller the standard-deviation (as expected).

\begin{figure}
  \centering
  \includegraphics[width=1\linewidth]{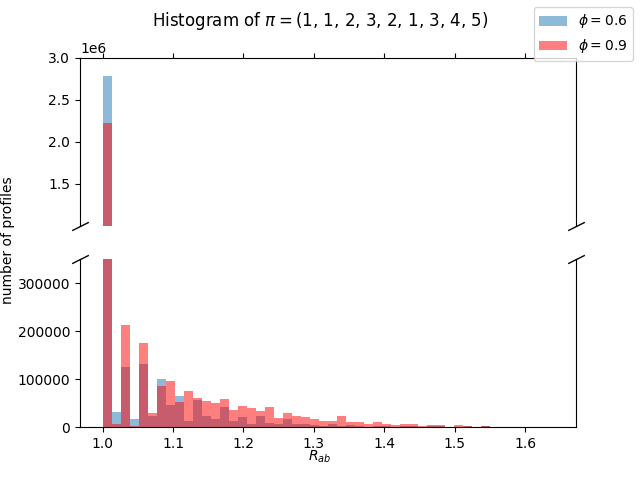}
  \caption{Histogram of the distribution of $R_{ab}$ with Mallows culture for different dispersion parameters.}
  \label{appendix:fig:hist_mallows_poa}
\end{figure}

\subsection{Sincerity Ratio}\label{appendix:subsec:srVisualisation}

In \autoref{appendix:fig:hist_SR_pi1-pi2} we display the histogram of $R_{cb}$ for the voting sequences of the two first rows of \autoref{tab:expSR1}. 
As noted in the main text, a sequence and its reverse give two $R_{cb}$ results which are the inverse of one another.
As a result, for one of the sequence the strategic vote being on average detrimental to social welfare, for the other one, it has the opposite effect and have instead a positive impact.

The histogram of \autoref{appendix:fig:hist_mallows_sr} presents the distribution of $R_{cb}$ for two Mallow cultures for the first sequence of \autoref{tab:expSR2}. We sampled $10!$ preference profiles. As already noticed from \autoref{appendix:fig:hist_mallows_poa} the standard variation decreases when the correlation between the voters' preferences increases.

\begin{figure}
  \centering
  \includegraphics[width=1\linewidth]{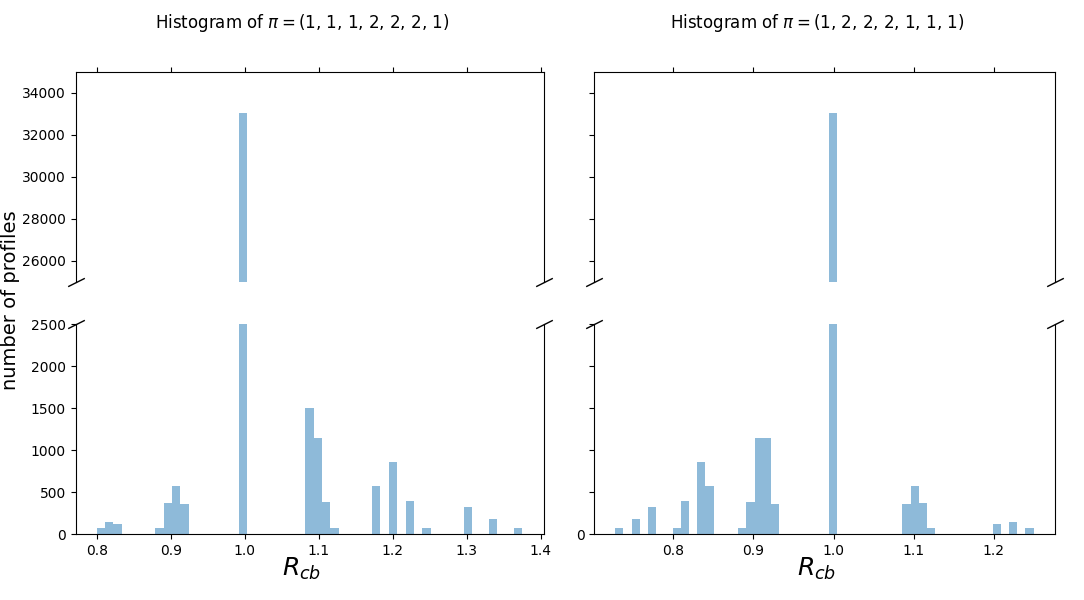}
  \caption{Distribution of the ratio $R_{cb}$ over all the 40320 profiles for the sequences $\pi_1$ and $\pi_2$.}
  \label{appendix:fig:hist_SR_pi1-pi2}
\end{figure}

\begin{figure}
  \centering
  \includegraphics[width=1\linewidth]{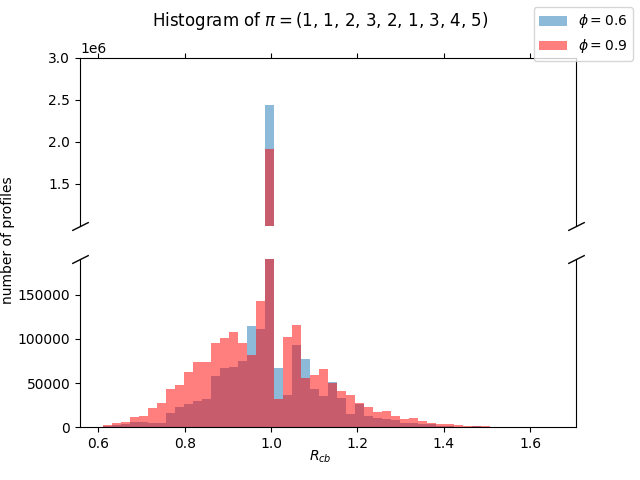}
  \caption{Distribution of the ratio $R_{cb}$ with Mallows culture for dispersion parameters $\phi=0.6$ and $0.9$.}
  \label{appendix:fig:hist_mallows_sr}
\end{figure}

\end{document}